%% Support sites:
%% http://www.michaelshell.org/tex/ieeetran/
%% http://www.ctan.org/pkg/ieeetran
%% and
%% http://www.ieee.org/

\documentclass[conference]{IEEEtran} 
%
% If IEEEtran.cls has not been installed into the LaTeX system files,
% manually specify the path to it like:
% \documentclass[journal]{../sty/IEEEtran}

% Some very useful LaTeX packages include:
% (uncomment the ones you want to load)

% *** MISC UTILITY PACKAGES ***
%
\IEEEoverridecommandlockouts
\usepackage{graphicx}
\usepackage{epsfig}
\usepackage{subfigure}
\usepackage{tabularx}
\usepackage{amssymb}
\usepackage{latexsym}
\usepackage{amsmath}
\DeclareMathOperator*{\argmax}{arg\,max}
\DeclareMathOperator*{\argmin}{arg\,min}

\usepackage{algorithm}
\usepackage[noend]{algpseudocode}
\usepackage{color}
\usepackage{relsize}
\usepackage{mathtools}
\usepackage{amsthm}
\usepackage[,skip=3pt]{caption}
\usepackage{cite}

\newtheorem{theorem}{Theorem}[section]

\theoremstyle{definition}
% \newtheorem{definition}{Definition}	
% correct bad hyphenation here
\hyphenation{op-tical net-works semi-conduc-tor}

\begin{document}

% \title{Deep Reinforcement Learning for AoI minimization in two-hop UAV-assisted IoT Networks}

\title{AoI-minimizing Scheduling in UAV-relayed IoT Networks}

%
%
% author names and IEEE memberships
% note positions of commas and nonbreaking spaces ( ~ ) LaTeX will not break
% a structure at a ~ so this keeps an author's name from being broken across
% two lines.
% use \thanks{} to gain access to the first footnote area
% a separate \thanks must be used for each paragraph as LaTeX2e's \thanks
% was not built to handle multiple paragraphs
%

%%%%%%%%%%%%%%%%%%%%%%%%%%%%%%%%%%%%%%%%%%%%%%%%%%

\author{Biplav Choudhury,\thanks{This research was supported in part by ONR under MURI Grant N00014-19-1-2621, Virginia Tech Institute for Critical Technology and Applied Science, and by Virginia Commonwealth Cyber Initiative (CCI). CCI is an investment in the advancement of cyber R\&D, innovation, and workforce development (www.cyberinitiative.org). Vijay K. Shah is with Cybersecurity Engineering (CYSE) Department at George Mason University, and rest of the authors are with the ECE department at Virginia Tech, Blacksburg, VA. 
(emails:\{biplavc, aidin, reedjh, thou\}@vt.edu and vshah22@gmu.edu).

This paper has been accepted for publication in IEEE MASS 2021.
This is a preprint version of the accepted paper.}

        Vijay K. Shah,
        Aidin Ferdowsi,
        Jeffrey H. Reed,
        and 
        Y. Thomas Hou

}

%%%%%%%%%%%%%%%%%%%%%%%%%%%%%%%%%%%%%%%%%%%%%%%%%%

% make the title area
\maketitle

% As a general rule, do not put math, special symbols or citations
% in the abstract or keywords.
\begin{abstract}

Due to flexibility, autonomy and low operational cost, unmanned aerial vehicles (UAVs), as fixed aerial base stations, are increasingly being used as \textit{relays} to collect time-sensitive information (i.e., status updates) from IoT devices and deliver it to the nearby terrestrial base station (TBS), where the information gets processed. In order to ensure timely delivery of information to the TBS (from all IoT devices), optimal scheduling of time-sensitive information over two hop UAV-relayed IoT networks (i.e., IoT device to the UAV [hop 1], and UAV to the TBS [hop 2]) becomes a critical challenge. To address this, we propose scheduling policies for Age of Information (AoI) minimization in such two-hop UAV-relayed IoT networks. To this end, we present a low-complexity MAF-MAD scheduler, that employs Maximum AoI First (MAF) policy for sampling of IoT devices at UAV (hop 1) and Maximum AoI Difference (MAD) policy for updating sampled packets from UAV to the TBS (hop 2). We show that MAF-MAD is the optimal scheduler under ideal conditions, i.e., error-free channels and generate-at-will traffic generation at IoT devices. On the contrary, for realistic conditions, we propose a Deep-Q-Networks (DQN) based scheduler. Our simulation results show that DQN-based scheduler outperforms MAF-MAD scheduler and three other baseline schedulers, i.e., Maximal AoI First (MAF), Round Robin (RR) and Random, employed at both hops under general conditions when the network is small (with 10's of IoT devices). However, it does not scale well with network size whereas MAF-MAD outperforms all other schedulers under all considered scenarios for larger networks.

\end{abstract}

% Note that keywords are not normally used for peerreview papers.
\begin{IEEEkeywords}
Age of Information, Scheduling, UAV-relayed IoT Networks, Deep Q Networks
\end{IEEEkeywords}

\IEEEpeerreviewmaketitle

\section{Introduction}
\label{sec:intro}

Unmanned Aerial Vehicles (UAVs), as fixed aerial base stations, are being increasingly used to collect time-sensitive information (or status updates) from Internet of Things (IoT) devices and forward them to the nearest terrestrial base station (TBS). We refer to such communication networks as \textit{UAV-relayed IoT networks}. UAV-relayed IoT networks are a promising solution in several scenarios -- (i) extending wireless services to IoT devices in remote smart agricultural farms where there is no direct wireless coverage of a TBS \cite{8965135}, (ii) offloading data processing task of a UAV to the TBS equipped with an edge server \cite{7906542} etc. An example is the smart city plans in Las Vegas where multiple UAVs are being deployed as ``advanced contextual and situational awareness modules" to monitor and assess the data captured by IoT devices \cite{HowLasVe14:online}.

A critical requirement in UAV-relayed IoT networks (and general IoT networks)
% deployed to monitor IoT devices 
is that the information received at the TBS from the IoT devices is timely or \textit{fresh} \cite{7415972}. To measure the level of information freshness, the concept of ``Age of Information'' (AoI) was conceived~\cite{5984917}. AoI metric has since attracted significant interest from the research community and is under active investigation. See a survey on AoI~\cite{9380899} and an online bibliography~\cite{webhomea5:online}. AoI is defined as the time elapsed since the most recent information was generated. AoI is fundamentally different from traditional metrics such as, latency, at lower network layers (transport, network or link layer), which only focuses on the elapsed time for delivering information between two end nodes in a network. 

%In contrast, AoI is an application-layer metric, and measures the total time since the generation of the information until the present, which includes transit delay as one of its components in its accounting of total elapsed time. Thus

There has been active research on designing scheduling policies to minimize AoI in UAV-assisted IoT networks~\cite{9151993}, general IoT networks~\cite{9305697}, vehicular networks, and other time-sensitive control applications~\cite{choudhury2020joint}. However, existing research on AoI scheduling (including UAV-assisted IoT networks) has been largely limited to single-hop wireless networks, which is not applicable to our UAV-relayed IoT network scenario. This is because our considered UAV-relayed IoT network involves two hops -- (i) [hop 1] - IoT device to UAV, and (ii) [hop 2] - UAV to the TBS. Scheduling for such two-hop network differ from single-hop networks as it involves two time-ordered steps -- (i) \textit{sampling} of IoT devices by the UAVs, followed by the (ii) \textit{updating} of these sampled information packets (from UAV) to the TBS. Therefore, designing scheduling policies for such two-hop UAV-relayed IoT networks becomes a non-trivial problem and is the focus of this work.
%a two-step process that should also guarantee causality/ordering of sampling of IoT devices to a certain UAV before those IoT devices' information packets get updated from that UAV to the TBS. 
Further, existing AoI research make a number of assumptions -- (i)  \textit{non-lossy channel conditions} - In reality, wireless channels are lossy and can change rapidly (e.g., for each time slot) and must be taken into consideration in real-time scheduling (ii) \textit{generate-at-will} traffic generation model at IoT devices -- IoT devices usually follow periodic traffic generation model, where an IoT device generates sampling updates at certain fixed periodic time intervals, which may be unknown apriori~\cite{aziz2019effective}.

In this article, we investigate designing optimal scheduling policies for two-hop UAV-relayed IoT networks under real-world conditions. Specifically, we consider the following: (i) we consider lossy channel conditions for both hops i.e., (IoT devices to UAV) and (UAV to the TBS), (ii) we allow each IoT device to generate packets at periodic traffic generation models, which may be unknown to the scheduler.

The main contributions of this paper are the following.

\begin{itemize}
    \item This paper investigates designing AoI scheduling policies for two-hop UAV-relayed IoT networks, which involves two time-ordered steps -- (i) \textit{sampling} of information packets from IoT devices to the UAV relay, and (ii) \textit{updating} of sampled information packets (corresponding to each IoT device) from UAV relay to the TBS. 
    \item We first present a Maximal AoI First - Maximal AoI Difference (MAF-MAD) scheduling policy. Specifically, MAF-MAD scheduler prioritizes sampling of an IoT device to the UAV relay which has highest AoI at the UAV relay, and similarly, updates the sampled information packet (at UAV relay) of the device with highest AoI difference between the UAV and the TBS. We show that MAF-MAD is the optimal scheduler under non-lossy channel conditions and generate-at-will traffic generation models at the IoT devices. 
    \item Next, we propose an AoI minimizing scheduling policy, based on Deep Q Network (DQN), which we call, DQN based scheduler. We show that DQN based scheduler significantly outperforms MAF-MAD scheduler when the channels are lossy and IoT devices follow periodic traffic generation models, esp. in case of small networks (with 10's IoT devices). This is because DQN is able to learn the characteristics of the network, i.e., channel fluctuations and traffic generation models. However, it does not scale well with network size mainly due to large action space (as we shall see later in the paper).
    \item Our simulation results show that both MAF-MAD and DQN based schedulers outperform all considered baseline schedulers, namely, Maximal AoI First (MAF), Round Robin (RR) and Random schedulers employed at both hops under all considered simulation scenarios. 
    %with varying channel conditions, number of IoT devices, UAVs, and traffic generation models etc. 
\end{itemize}

The rest of the paper is arranged as follows: Sec. \ref{sec:related} discusses the related works and Sec. \ref{sec:system} provides the system model for our work. In Sec. \ref{sec:schedulers}, we explain the different schedulers and their performance in minimizing AoI is presented in Sec. \ref{sec:results}. The paper is concluded in Sec. \ref{sec:conclusion}.

\section{Related Work}
\label{sec:related}

%\Vijay{Biplav - I would say - AoI has been well investigated in wireless networks and others. However, not so much in case of UAV-assisted IoT networks.}

As the availability and ease of deployment of UAVs continue to improve, its role in IoT networks under the context of information freshness has seen great interest in recent years. In \cite{abd2019deep}, the authors study a scenario where a single battery constrained UAV monitors multiple ground nodes. A DL based approach was used to develop a joint trajectory and scheduling design that minimizes AoI at the UAV. The work in \cite{ahani2020age} considers a similar setting where a UAV has to visit multiple sensor nodes under battery constraints. Similarly, a framework for finding the trajectory that minimizes the average AoI and maximum AoI is presented in \cite{liu2018age}. A cellular Internet of UAVs is considered in \cite{9014214} where a distributed trajectory selection framework is proposed to minimize AoI. %Each cycle involves the UAV deciding on its next task of sampling a particular IoT device and traversing to its location. 
Note all these works (along with \cite{zhou2019deep, 9005434, hu20201, 9162896}) focus mainly on determining the optimal trajectory of the UAVs with the scheduling of packets being only over one hop from IoT devices to the UAV.

Since our UAV-relayed IoT network employ UAV as relays between IoT devices and the BS, we also discuss the related works on AoI for relay networks. Authors in \cite{buyukates2018age} consider a scenario with a single node transmitting to multiple nodes via relays. The source waits for an acknowledgment before transmitting the next sample, and the best waiting time is obtained in closed form. Reference \cite{arafa2019timely} proposes optimal AoI minimizing online and offline policies where a single energy constrained source transmits its information to a single receiver via an energy constrained relay. In \cite{moradian2020age}, a discrete-time stochastic hybrid system analysis is done, first under one relay and a direct link, and then with two relays and no direct link between a single transmitter and a single receiver. Our work is closest to \cite{song2020optimal} where a relay is used to sample and forward the information of multiple IoT devices to multiple destinations. However, it considers a generate-at-will traffic model for traffic generation at the IoT devices with only a single UAV acting as the relay. % compared to multiple UAVs in our case.

% Based on these limitations of the prior work, ...

%Existing 

\section{System Model and Problem Formulation}
\label{sec:system}

\begin{figure} [htb]
    \centering
    \vspace{-0.1 in}
    \includegraphics[scale=0.30, trim={5cm 0cm 4cm 1.2cm},clip, angle=0]{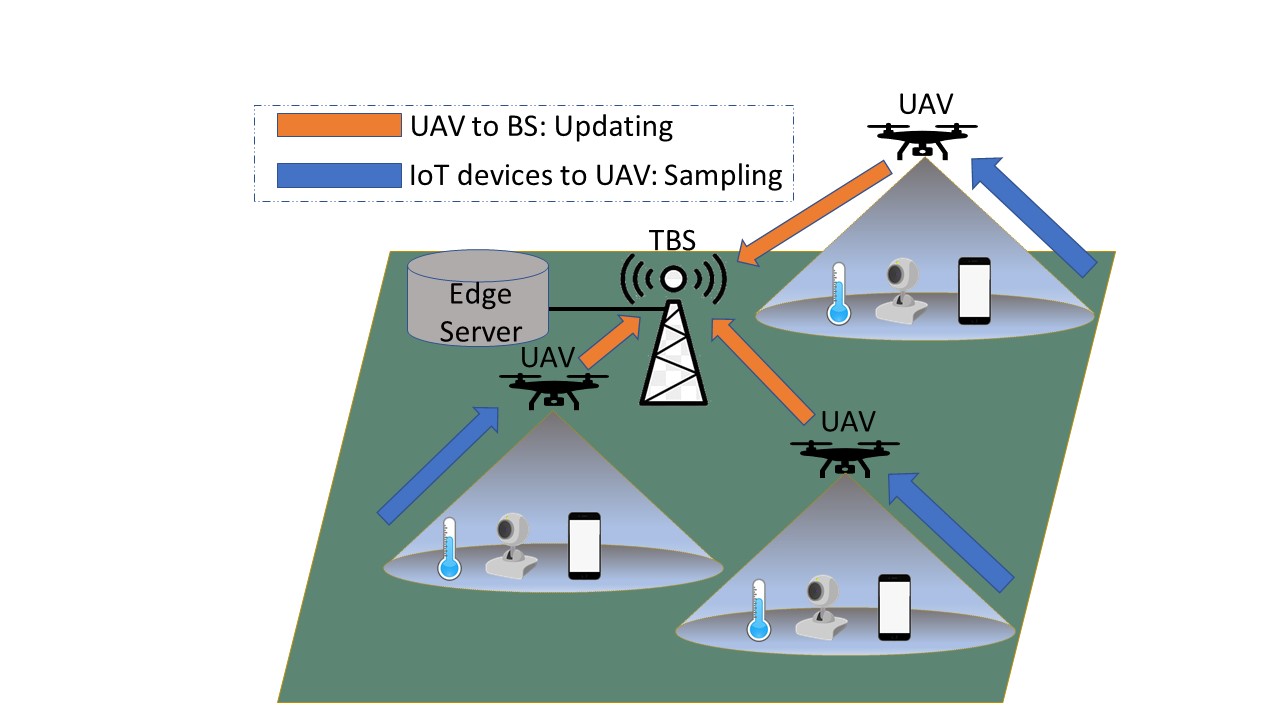}
    \vspace{0in}
    % edge server image in https://deepai.org/publication/federated-learning-in-mobile-edge-computing-an-edge-learning-perspective-for-beyond-5g
    \caption{UAV-relayed IoT networks system model}
    %\caption{$N$ UAVs acting as static ABS relays provide coverage to $M$ IoT devices with $M>N$. In the sampling step, each UAV selects a subset of devices under its coverage to sample. Then in the updating step, the TBS selects a subset of the total devices to forward their information to it from the UAVs. The TBS is equipped with an edge server to process the collected data. The selection of devices at each step is constrained by the number of channels available to the UAVs and the TBS.}
    \label{fig:scenario}
    % \vspace{-0.05 in}
\end{figure}

% \iffalse

% \fi

\subsection{Network Model} 
As shown in Figure \ref{fig:scenario}, we consider a UAV-relayed IoT network with a terrestrial base station (TBS), $M$ IoT devices and $N$ UAVs. The TBS is placed at a (random) fixed 2D location in a certain geographical region. The TBS is equipped with edge server, where the collected time-sensitive information data are processed. Similarly, $M$ IoT devices are distributed randomly in the considered region. Since IoT devices are usually energy-constrained and have limited wireless range (most IoT devices have range less than 100 meters~\cite{HowIoTSh99:online}), the IoT devices are outside the wireless coverage of the nearest TBS, such as, in case of rural agricultural smart farms, where the IoT devices are far away from the nearest TBS. Therefore $N$ UAVs are deployed as fixed aerial base stations (ABS) in the geographical region to provide end-to-end wireless coverage between the nearest TBS and all $M$ IoT devices. Each UAV $n \in N$ is deployed in the region of interest such that  

(i) UAV $n$ lies within the wireless coverage of the TBS and, provides \textit{backhaul links} between itself and TBS and 

(ii) UAV $n$ provides wireless coverage as \textit{access links} to $m_n$ subset of IoT devices, where $m_n \subseteq M$. 

Even though a certain IoT device may fall within the wireless coverage of two or more UAVs, we assume that each IoT device is associated to a unique UAV based on some association policy (e.g., max power association policy~\cite{8723312}), and thus, the subset of IoT devices, $m_n$, is unique to each UAV, and $M = \cup_{n \in N} m_n$. 
%Furthermore, the locations of TBS, IoT devices, and UAVs are known and fixed over $T$ as well.
We assume that there is neither intra-cell interference (i.e., UAV and it's associated IoT devices) nor inter-cell interference (i.e., UAV to UAV). This is realistically possible by employing non-overlapping orthogonal channels for both access links \cite{9419071} and backhaul links \cite{8932190}. %\biplav{need to cite papers?}

Let $K$ denote the number of non-overlapping orthogonal channels for %backhaul links, i.e., 
communication between the TBS and all $N$ UAVs. Similarly, let $L_n$ denote the total number of channels available between a certain UAV $n$ and it's associated IoT devices. For easiness of the presentation, we consider a fixed $L = L_n, (\forall n \in N$) channels for access links between a certain UAV and it's associated IoT devices. However, our model can be easily extended to unique $L_n$ channels for access links between a UAV and its associated IoT devices. 

Next we discuss different traffic generation models at each IoT device and channel conditions considered in our model. Assume $T$ is the total observation duration and it is divided into equal-length time slots denoted by $t$. 

\textbf{Traffic generation at the IoT devices.} We consider two different types of traffic generation models at each IoT device $m$ -- (i) \textit{generate-at-will}: $m$ generates new packet at each time slot $t$~\cite{abd2019deep}, and (ii) \textit{periodic packet generation}: Here, $m$ generates information packets at a fixed time interval denoted by $p_m$. It means, $m$ generates a new packet at time slots $p_m t, 2p_m t, \dots, \lfloor \frac{T}{p_m} \rfloor t$. Note that though $p_m$ is a fixed value for a certain IoT device $m$, it may  be unique to each IoT device, i.e., $p_m \neq p_{m'}$, where $m, m' \in M$. The scheduler doesn't have any prior information on the periodicity of traffic generation at a certain IoT device.

\textbf{Channel Conditions:} Because of the fluctuations in the wireless channel, some packets in both the two hops are lost in transmission. %This results in some packet transmissions between IoT device and UAV, and between the UAVs and the TBS, failing. 
It is referred to as link outage and occurs whenever the transmission rate exceeds the channel capacity \cite{tse2005fundamentals}.
%We consider realistic lossy channel conditions where the information packets may get lost/dropped due to ..... 
Similar to \cite{song2020optimal}, we capture the link outage by considering that the link between device $m$ and the associated UAV $n_m $ has a non-zero probability of dropping the transmitted packets, which we refer to as, \textit{Sample Loss Probability} and is denoted by $l_{s,m} \in (0,1)$. Similarly, $l_{u,m} \in (0,1)$ denotes the \textit{Update Loss Probability}, which refers to the packet loss probability between UAV $n_m$ and the TBS.

\textbf{Scheduling:} At each slot $t$, our considered UAV-relayed IoT networks involves two simultaneous scheduling steps -- (i) \textit{sampling} of IoT devices by the UAVs and (ii) \textit{updating} of sampled information packets from a certain UAV to the TBS. The set of IoT devices sampled by UAV $n$ is denoted by $S_n(t)$ and total devices sampled by all the UAVs is $\mathcal{S}(t)=\{\mathcal{S}_1(t),\mathcal{S}_2(t), \dots, \mathcal{S}_N(t)\}$. Similarly, updated devices are denoted as $\mathcal{U}(t)$. When a device $m$ is sampled, it transmits its most recent packet to the serving UAV $n_m$ and it will replace any of its older packet at $n_m$ \cite{8885156}. Then when it is updated, the packet is transmitted from UAV $n_m$ to the TBS. At any slot, each channel can support at most 1 packet. % from any of the IoT devices.

\subsection{Age of Information (AoI) at the UAV and TBS} We employ Age of Information (AoI) to measure the freshness of information. In particular, AoI is defined as \textit{time elapsed since the generation time of the most-recent packet received %by the destination 
(by the UAV/TBS in our case).} 
% At time $t$, the AoI of the most recently generated packet at the $m^{th}$ IoT device is denoted by $AoI_m^{device} (t)$ and is calculated as Eqn. \eqref{eqn:AoI_evolution_device} - 
% \begin{align}
%     AoI_m^{device} (t) = t - \tau_{m,g} 
%     \label{eqn:AoI_evolution_device}
% \end{align}
% \noindent where $\tau_{m,g}$ is the generation time of the most recent packet at the device. 
% %Sample-at-will is equivalent to a $p_m=1$ as the devices always have a new packet to transmit at each slot. 
% For sample at will,  $\tau_{m,g} = t \hspace{0.1in} \forall t$ which is also equivalent to $p_m=1$, due to which . For periodic generation with $p_m \neq 1$, packets are generated at slots $p_m, 2p_m,3p_m..$ and $\tau_{m,g}$ becomes $qp_m$ such that $q=\lfloor{\frac{t}{p_m}} \rfloor$. Therefore the calculation of $AoI^{device}$ involves $p_m$ which is unknown to the scheduler.
The AoI of IoT device $m$ at UAV $n_m$ is denoted as %by $AoI_m^{UAV} (t)$ and is calculated as %Eqn. \eqref{eqn:AoI_evolution_UAV} - 

\begin{align}
    AoI_m^{UAV} (t) = t - \tau_{s,m} 
    \label{eqn:AoI_evolution_UAV}
\end{align}

\noindent where $t$ is the current slot and $\tau_{s,m}$ is the generation time of the \textit{most recently sampled} packet of device $m$ that was received successfully at UAV $n_m$. 
%Thus $\tau_{m,s}$ = $\hat{q}p_m$ such that $\hat{q}=\lfloor{\frac{t}{p_m}} \rfloor$. 
Similarly, the AoI of device $m$ at the TBS is %denoted as 
%$AoI_m^{BS} $ and is calculated as %Eqn. \eqref{eqn:AoI_evolution_BS} 
\begin{align}
    AoI_m^{TBS} (t)  = t - \tau_{u,m}
    \label{eqn:AoI_evolution_TBS}
\end{align}

\noindent where $\tau_{u,m}$ is the generation time of the \textit{most recently updated} packet of the $m$th IoT device that was received successfully at the TBS. Based on Eqn. \eqref{eqn:AoI_evolution_UAV} and \eqref{eqn:AoI_evolution_TBS}, AoI increases linearly in time slots of no reception and decreases at reception instants.

The evolution of AoI with time at the UAVs and TBS is described next. At $t$, if device $m$ was selected for sampling, i.e., $m \in \mathcal{S}(t)$, its AoI at the UAV changes as %Eqn. \ref{eqn:transition_AoI_UAV_sample}

\begin{equation} \label{eqn:transition_AoI_UAV_sample}
    AoI_{m}^{UAV}(t+1) =
    \begin{cases}
    t+1-\tau_{s,m} & \text{with prob.} \hspace{.05in} 1-l_{s,m}  \\
    % AoI_{m}^{device}(t)+1 & \text{with prob.} \hspace{0.05in} l_{s,m} \\
    AoI_{m}^{UAV}(t)+1 & \text{with prob.} \hspace{0.05in} l_{s,m} 
    \end{cases}
\end{equation}

Else if the device $m$ was not sampled, i.e., $m \notin \mathcal{S}(t)$, its AoI at the UAV changes as %Eqn. \eqref{eqn:transition_AoI_UAV_no_sample}

\begin{equation} \label{eqn:transition_AoI_UAV_no_sample}
    AoI_{m}^{UAV}(t+1) = AoI_{m}^{UAV}(t)+1
\end{equation}

Similarly AoI for IoT device $m$ at the TBS when selected for update such that $m \in \mathcal{U}(t)$ and when not selected for update such that $m \notin \mathcal{U}(t)$ are shown in Eqn. \eqref{eqn:transition_AoI_TBS_update} and  \eqref{eqn:transition_AoI_TBS_no_update}.% respectively.

\begin{equation} \label{eqn:transition_AoI_TBS_update}
    AoI_{m}^{TBS}(t+1) =
    \begin{cases}
     AoI_{m}^{UAV}(t)+1 & \text{with prob.} \hspace{.05in} 1-l_{u,m}  \\
    AoI_{m}^{TBS}(t)+1 & \text{with prob.} \hspace{0.05in} l_{u,m} 
    \end{cases}
\end{equation}

\begin{equation} \label{eqn:transition_AoI_TBS_no_update}
    AoI_{m}^{TBS}(t+1) = AoI_{m}^{TBS}(t)+1
\end{equation}

% in both the UAV and the BS. 
Thus a device's AoI increases at the UAVs and the TBS if- 
\begin{itemize}
    \item it was sampled/updated but the packet was lost in transmission 
    \item it was not sampled/updated.
\end{itemize}

As the information is relayed through the UAVs, $AoI_{m}^{UAV}$ directly impacts $AoI_{m}^{TBS}$. Note that the communication between a device and the TBS has a delay of a single slot, due to which information sampled by an UAV at a certain slot cannot be updated to the TBS in the same slot \cite{song2020optimal}. 

Finally, the average AoI of all IoT devices at the UAVs and the TBS during the observation interval $T$ is calculated as Eqn. \eqref{eqn:UAV_AoI_avg} and \eqref{eqn:TBS_AoI_avg} respectively

\begin{align}
    AoI^{UAV}(T) = \frac{1}{M} \sum_{t=1}^{T} \sum_{m=1}^{M} AoI_{m}^{UAV} (t)
    \label{eqn:UAV_AoI_avg}
\end{align}

% \noindent sum AoI of all $m$ devices at the TBS is calculated as shown in Eqn.   - 

\begin{align}
    AoI^{TBS}(T) = \frac{1}{M} \sum_{t=1}^{T} \sum_{m=1}^{M} AoI_{m}^{TBS} (t)
    \label{eqn:TBS_AoI_avg}
\end{align}
% \noindent where $T$ is the observation interval. 

\textit{Problem formulation:} Our objective to design a scheduler that ensures minimum AoI corresponding to all IoT devices at the TBS. Given $S_n(t)$ and $\mathcal{U}(t)$ respectively denote the set of the IoT devices sampled by each UAV $n \in N$ and updated to the TBS at a certain time slot $t \in T$, we formulate the AoI-aware scheduling problem as the minimization of the average AoI of all IoT devices at the TBS (Eqn. \eqref{eqn:TBS_AoI_avg}) subject to the limited channel constraints, as follows.

 \begin{align}
 	\nonumber & \min_{} AoI^{TBS}(T) \\
 	\text{s.t. } & \left| \mathcal{S}_{n}^{}(t) \right| \le L, \hspace{5pt} \forall n \in N \hspace{5pt} \text{and} \hspace{5pt} t=1,2...T \label{constraint1}
 	\\
 	& \left| \mathcal{U}^{}(t) \right| \le K, \hspace{5pt} t=1,2...T \label{constraint2}
 \end{align}
 
\noindent where constraint \eqref{constraint1} and \eqref{constraint2} refers to the limited number of channels for the sampling and updating respectively. The notations used in this paper are summarized in Table \ref{tab:notations}.

%  \Vijay{How about sample and update loss probability??} 

\begin{table}
 \centering
 \small
 \vspace{-0.18in}
 \caption{Notations}
%   \scriptsize
%  \begin{tabular}{|p{3cm}|p{3.6cm}|}
    \begin{tabular}{|p{5.9cm}|p{2cm}|}
    \hline
    \textbf{Meaning }  & \textbf{Symbol} \\ \hline
    Number of IoT devices  & $M$\\ \hline
    Number of UAVs  & $N$\\ \hline
    IoT devices associated to UAV $n$  & $m_n$ \\ \hline
    UAV providing coverage to device $m$ & $n_m$ \\ \hline
    Packet generation periodicity of device $m$ & $p_m$  \\ \hline
    AoI of device $m$ at UAV & $AoI_m^{UAV}$ \\ \hline
    AoI of device $m$ at TBS & $AoI_m^{TBS}$ \\ \hline
    Packet loss between device $m$ and UAV $n_m$ & $l_{s,m}$ \\ \hline
    Packet loss between device $m$ and TBS & $l_{u,m}$ \\ \hline
    Devices sampled at $t$ by UAV $n$  & $\mathcal{S}_n(t) $ \\ \hline
    Devices sampled at $t$=$\{\mathcal{S}_1(t), \mathcal{S}_2(t),..\mathcal{S}_N(t)\} $  & $\mathcal{S}(t) $ \\ \hline
    Devices updated at $t$  & $\mathcal{U}(t) $  \\ \hline
    Channels at each UAV & $L$ \\ \hline
    Channels at the TBS & $K$ \\ \hline
    % Observation Interval & $T$ \\ \hline
    
 \end{tabular}
 \label{tab:notations}
\end{table}

\section{Scheduling Policies}
\label{sec:schedulers}

We first propose a simple low-complexity centralized , \textit{Maximal AoI First-Maximal AoI Difference} (MAF-MAD) scheduling algorithm to solve the above problem. MAF-MAD runs at the TBS and its underlying idea is to sample users with maximum AoI at the serving UAV, and update IoT devices with maximum AoI difference between the UAV and the TBS. We show that MAF-MAD is an optimal scheduler for UAV-relayed IoT networks under ideal conditions where (i) the channels are non-lossy and (ii) all the IoT devices follow generate-at-will traffic generation models. However, under general conditions (i.e., lossy channels and periodic traffic generation models), the scheduling decisions need to incorporate packet generation instants and channel non-idealities, which is not possible in case of MAF-MAD scheduler. Thus, we propose a centralized learning algorithm based on Deep Q networks (DQN), which we call \textit{DQN-based scheduler}. DQN based scheduler also runs at the TBS and learns the unique network characteristics, and thus, promises to perform well for UAV-relayed IoT networks under realistic conditions.

\subsection{MAF-MAD Scheduler}
\label{mad_policy}
% In general terms, the MAD scheduler focuses on the devices with highest AoI difference between the receiver and the sender. %\cite{8433704}. 
% In our setting, this AoI difference is calculated during updating between 

%Maximal AoI Difference (MAD) scheduler is a centralized algorithm and runs at the TBS. 

%Maximal AoI First - Maximal AoI Difference (MAF-MAD) scheduler is a centralized scheduling algorithm which runs at the TBS and its 

The algorithmic details of Maximal AoI First - Maximal AoI Difference (MAF-MAD) scheduler are presented in  Algorithm \ref{algorithm - MAD}. At any time $t \in T$, the inputs to the this algorithm are AoI of each IoT device $m \in M$ 
at its serving UAV $AoI_m^{UAV}(t)$ and at the TBS $AoI^{TBS}_m(t)$. For the sampling step in line \ref{MAD-sample}, each UAV $n$ samples $L$ IoT devices with the maximal AoI, $AoI_m^{UAV}(t)$, first (MAF) from the $m_n$ IoT devices associated with it. Note that ideally we should consider AoI difference between the UAV and the IoT device, but this is not feasible as the scheduler doesn't know the traffic generation instants of the IoT devices. However when devices generate packets using generate-at-will traffic model, any device which is selected to be sampled will generate a new packet at that slot due to which $AoI_m^{UAV}(t)$ also becomes the AoI difference between the UAV and the IoT device.

Then for the updating step, devices with \textit{maximal AoI difference} (MAD) between the UAV and TBS are prioritized. This AoI difference is calculated as 

\begin{align}
    AoI_m^{diff} (t) = AoI_{m}^{TBS}(t) - AoI_{m}^{UAV}(t) 
    \label{eqn:AoI_users_diff_update}
\end{align}

\noindent where $AoI_{m}^{TBS}(t)\geq AoI_{m}^{UAV}(t)$. The TBS updates $K$ users out of the total $M$ IoT devices with the highest $AoI_m^{diff} (t)$, as shown in line \ref{MAD-update}.

% As the scheduler doesn't know the status of the packet generation by the IoT devices (See Sec. \ref{sec:system}), AoI difference cannot be calculated for the sampling step between the UAV and the IoT device due to which devices with highest AoI at the UAV ($AoI_m^{UAV}(t)$) will be selected for sampling \cite{song2020optimal}. Let the AoI difference for device $m$ between the TBS and UAV be denoted as $AoI_m^{diff}(t)$ and calculated as Eqn. \eqref{eqn:AoI_users_diff_update}. 

% \begin{align}
%     AoI_m^{diff} (t) = AoI_{m}^{TBS}(t) - AoI_{m}^{UAV}(t) 
%     \label{eqn:AoI_users_diff_update}
% \end{align}

% \noindent where $AoI_{m}^{TBS}(t)\geq AoI_{m}^{UAV}(t)$. The number of devices sampled and updated are constrained by the number of available channels, i.e. $L$ and $K$ respectively. Its optimality under ideal conditions will be proved next. 

\begin{algorithm}  
\small
	\textbf{Input:} $AoI_m^{UAV}(t)$ and $AoI_m^{TBS}(t)$ for $m \in M$.
	
	\textbf{Output:} $\mathcal{S}(t), \mathcal{U}(t)$.
	
	\begin{algorithmic} [1]
	\State $\mathcal{S}(t) = \phi$
	\For {UAV $n = 1,2,..N$}:
    \State $\mathcal{S}_n(t) = \argmax_{S \subseteq m_n, |\mathcal{S}_n(t)| \le L} \{AoI_{m}^{UAV}(t)\}_{m=1}^{m_n}$ \label{MAD-sample}
    \State $ \mathcal{S}(t) = \mathcal{S}(t) \cup \mathcal{S}_n(t) $
    \EndFor
    \State $\mathcal{U}(t) = \argmax_{U \subseteq M, |\mathcal{U}(t)| \le K} \{AoI_{m}^{diff}(t)\}_{m=1}^{M} $ \label{MAD-update}
	\end{algorithmic}  
	\caption{MAF-MAD scheduler}
	\label{algorithm - MAD}
\end{algorithm}

\iffalse
\begin{algorithm}  
\small
	\textbf{Input:} $AoI_m^{UAV}(t)$ and $AoI_m^{TBS}(t)$ for $m \in M$.
	
	\textbf{Output:} $\mathcal{S}(t), \mathcal{U}(t)$.
	
	\begin{algorithmic} [1]
	\For {$t=1,2,...T$}
	\State $\mathcal{S}(t) = \phi$
	\For {UAV $n = 1,2,..N$}:
    \State $\mathcal{S}_n(t) = \argmax_{S \subseteq m_n, |\mathcal{S}_n(t)| \le L} \{AoI_{m}^{UAV}(t)\}_{m=1}^{m_n}$ \label{MAD-sample}
    \State $ \mathcal{S}(t) = \mathcal{S}(t) \cup \mathcal{S}_n(t) $
    \EndFor
    \State $\mathcal{U}(t) = \argmax_{U \subseteq M, |\mathcal{U}(t)| \le K} \{AoI_{m}^{diff}(t)\}_{m=1}^{M} $ \label{MAD-update}
    \EndFor
	\end{algorithmic}  
	\caption{MAD scheduler}
	\label{algorithm - MAD}
\end{algorithm}
\fi

\begin{theorem}
Under ideal conditions, i.e., no lossy channels and generate-at-will sampling at IoT devices, MAF-MAD is the optimal scheduling policy for minimizing average AoI at the TBS for status updates generated at each IoT device.
\end{theorem}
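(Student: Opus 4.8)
Under ideal conditions every loss probability is zero and a sampled device always produces a fresh packet, so the transitions \eqref{eqn:transition_AoI_UAV_sample} and \eqref{eqn:transition_AoI_TBS_update} collapse to the deterministic rules $AoI_m^{UAV}(t{+}1)=1$ if $m\in\mathcal{S}(t)$ and $AoI_m^{UAV}(t{+}1)=AoI_m^{UAV}(t)+1$ otherwise, and $AoI_m^{TBS}(t{+}1)=AoI_m^{UAV}(t)+1$ if $m\in\mathcal{U}(t)$ and $AoI_m^{TBS}(t{+}1)=AoI_m^{TBS}(t)+1$ otherwise. Hence once a policy is fixed the whole age process is deterministic, and minimizing $AoI^{TBS}(T)$ in \eqref{eqn:TBS_AoI_avg} is equivalent to minimizing the finite sum $\sum_{t=1}^{T}\sum_{m=1}^{M} AoI_m^{TBS}(t)$ over all sampling/updating sequences obeying \eqref{constraint1}--\eqref{constraint2}; an optimal policy may be taken deterministic. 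The plan is to prove optimality by two successive interchange (exchange) arguments: first that there is an optimal policy whose sampling decisions are exactly MAF, then that, conditioned on MAF sampling, MAD is optimal for updating.

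\emph{Step 1 (sampling is MAF).} I would first prove a monotonicity lemma: if two policies use the same sequence of updated device-sets but one has, at every slot, a vector $\big(AoI_m^{UAV}(t)\big)_m$ that is dominated in the sorted (weak-majorization) order by the other's, then it also has the smaller value of $\sum_{t}\sum_m AoI_m^{TBS}(t)$; this follows by induction on $t$ from the two deterministic rules above, since a successful update copies $AoI_m^{UAV}(t)+1$ into the TBS age. Because the device pools $m_n$ are disjoint and the hop-1 channels are per-UAV, a sampling choice at UAV $n$ affects only that UAV's ages, and the standard single-hop Maximum-AoI-First optimality argument shows that resetting, at each slot, the $L$ devices of largest $AoI_m^{UAV}$ yields the sorted-minimal hop-1 age vector at every later slot. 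Starting from any optimal policy and swapping, at the first slot where its sampling differs from MAF, to the MAF choice (carrying along a compatible relabelling of later updated devices so the monotonicity lemma applies) therefore does not increase the objective; iterating this yields an optimal policy that samples by MAF.

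\emph{Step 2 (updating is MAD).} Fix sampling to MAF. The local effect of an update is the crucial computation: updating $m$ at slot $t$ replaces the ``no-update'' age $AoI_m^{TBS}(t)+1$ by $AoI_m^{UAV}(t)+1$, an immediate reduction of exactly $AoI_m^{diff}(t)=AoI_m^{TBS}(t)-AoI_m^{UAV}(t)\ge 0$, which is the quantity MAD maximizes in \eqref{eqn:AoI_users_diff_update}. Given an optimal updating sequence that first deviates from MAD at slot $t_0$ by updating some $j'$ with $AoI_{j'}^{diff}(t_0) < AoI_{j}^{diff}(t_0)$ instead of a MAD-preferred device $j$, I would move this update from $j'$ to $j$ at $t_0$ and, if necessary, install a compensating swap at the next slot at which $j$ or $j'$ is updated. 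The net change is a guaranteed gain $AoI_{j}^{diff}(t_0)-AoI_{j'}^{diff}(t_0)$ at slot $t_0{+}1$, against a later loss that I would bound by tracking the two devices' ages until their next updates, using that $AoI_m^{TBS}$ merely increments by $1$ in the interim and that MAF already keeps $AoI_m^{UAV}$ as fresh as possible, so a postponed update of $j$ cannot fare better than it does now. Iterating removes every deviation and proves MAF-MAD optimal.

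\emph{Main obstacle.} The delicate case in Step 2 is when the MAD-preferred device $j$ is also scheduled for sampling at $t_0$: then its UAV age, and hence the cost of updating $j$ one slot later, is about to drop to $1$, so ``update $j$ now'' versus ``update $j$ at $t_0{+}1$'' is a genuine trade-off and the compensating interchange must be chosen with care (note that after a simultaneous sample-and-update of $j$ one still has $AoI_j^{diff}(t_0{+}1)=AoI_j^{UAV}(t_0)\ge 1$). A secondary difficulty is that Step 1 perturbs the very $AoI_m^{UAV}$ values that MAD reads in Step 2, so the two interchange arguments do not compose for free; if this coupling proves awkward, the fallback is a single induction on $t$ comparing MAF-MAD against an arbitrary feasible policy and maintaining the invariant that the sorted vectors of both $AoI_m^{UAV}(t)$ and $AoI_m^{TBS}(t)$ under MAF-MAD are weakly majorized by those under the arbitrary policy.
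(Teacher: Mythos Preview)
Your proposal is sound in outline but takes a substantially different route from the paper. The paper's proof is a short myopic-greedy argument: it first invokes a result from \cite{song2020optimal} asserting that, to minimize AoI at the destination, the optimal scheduler must at every slot (i) sample so as to maximize the one-step reduction of average AoI at the relay and (ii) update so as to maximize the one-step reduction of average AoI at the TBS. It then simply computes these one-step reductions under ideal conditions---obtaining $\frac{1}{M}\bigl(AoI_j^{UAV}(t)-|m_n|\bigr)$ for sampling device $j$ and $\frac{1}{M}\bigl(AoI_k^{diff}(t)-M\bigr)$ for updating device $k$---and observes that MAF and MAD respectively maximize them. All of the structural work (why per-slot greedy is globally optimal, and in particular why the two hops decouple so that minimizing AoI at the UAV is the right proxy for hop 1) is delegated to the cited reference.

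Your interchange/majorization program is more self-contained and more honest about where the real difficulties lie: the coupling between Step 1 and Step 2, and the simultaneous-sample-and-update case you flag, are exactly the issues the paper's argument never confronts on its own. What the paper's approach buys is brevity and a two-line calculation; what yours buys is an argument that does not rest on an external result for the crucial step. If you want to match the paper, drop the exchange machinery and just exhibit the two one-step reductions above, then appeal to the greedy-optimality principle. If you want a stand-alone proof, your fallback plan---a single induction on $t$ maintaining simultaneous weak majorization of both the $AoI^{UAV}$ and $AoI^{TBS}$ vectors under MAF-MAD versus an arbitrary policy---is likely cleaner than the two-stage interchange, precisely because it sidesteps the composition problem you identified.
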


% \begin{proof}
% From the intuitions presented above, at each slot $t=1,2,\dots,T$ the optimal scheduler needs to sample and update so that $AoI^{UAV}(t)$ and $AoI^{TBS}(t)$ are minimized respectively. This is achieved by --

\begin{proof}
As also presented in \cite{song2020optimal}, to minimize the AoI at a destination ($AoI^{TBS}(T)$ in our case), at each slot $t=1,2,\dots,T$ the optimal scheduler needs to 

\begin{itemize}
    \item sample devices whose packet reception at the relay (UAV in our case) will lead to the maximum reduction of average AoI at the relay, 
    \item update devices whose packet reception at the TBS will lead to the maximum reduction of average AoI at TBS.
\end{itemize}

We first show that sampling as per MAF-MAD offers the maximum reduction in average AoI at all the UAVs. 

\textbf{AoI reduction at the UAVs:} 
Consider a UAV $n$ % \in N$ 
providing coverage to $|m_n|$ (out of $M$) IoT devices. At slot $t$, the average AoI of these IoT devices at UAV $n$ is 

\begin{align}
    AoI_n^{UAV}(t) = \frac{1}{M} \sum_{i=1}^{m_n} AoI^{UAV}_{i}(t) 
\end{align}

\noindent and from the result cited above, the optimal scheduler will sample those devices which will lead to maximum reduction of average AoI at $n$ from $AoI_n^{UAV}(t)$ to $AoI_n^{UAV}(t+1)$. 

For simplicity, assume $L$=$K$=$1$. Therefore the scheduler needs to select $L$=$1$ device to sample. Let device $j \in m_n$ have the largest AoI at $n$ such that $ AoI_j^{UAV}(t) = \max({AoI_1^{UAV}(t),AoI_2^{UAV}(t), \dots AoI_{m_n}^{UAV}(t)})$. From Algorithm \ref{algorithm - MAD} step \ref{MAD-sample}, at $t$ the MAF-MAD scheduler will sample the $j$th device. Then the reduction in the average AoI at $n$ is

\begin{align}
    \nonumber \Delta AoI_n^{UAV}(t+1) &= \frac{1}{M} (AoI_n^{UAV}(t) - AoI_n^{UAV}(t+1)) \\
    &=\frac{1}{M} \sum_{i=1}^{m_n} (AoI^{UAV}_{i}(t) - AoI^{UAV}_{i}(t+1) )
    \label{eqn:delta_AoI_UAV}
\end{align}

For the $|m_n|-1$ unsampled devices, 
their $AoI^{UAV}(t+1)$ increases by 1 (Eqn. \eqref{eqn:transition_AoI_UAV_no_sample}). $AoI^{UAV}(t+1)$ for $j$ becomes $t+1-\tau_{s,m}$ as per Eqn. \eqref{eqn:transition_AoI_UAV_sample} under no packet loss. Additionally for generate-at-will, $j$ will generate a packet at $t$, i.e., $\tau_{s,m}=t$, as it was sampled at $t$. Therefore Eqn. \eqref{eqn:delta_AoI_UAV} becomes

\begin{align}
    \nonumber \Delta AoI_n^{UAV}(t+1) &= \frac{1}{M} (\sum_{i=1, i\neq j}^{|m_n|}(AoI_{i}^{UAV}(t)-AoI_{i}^{UAV}(t+1))\\ & \nonumber \hspace{2mm}  + (AoI_{j}^{UAV}(t)- AoI_j^{UAV}(t+1))) \\
    \nonumber &=\frac{1}{M}( -(|m_n|-1)+AoI_j^{UAV}(t)-\\& \nonumber  \hspace{4mm} (t+1-t)) \\
    &= \frac{1}{M}(AoI_j^{UAV}(t) - |m_n|)
    \label{eqn:AoI_reduction_optimal_UAV}
\end{align}

Now for any other scheduler that selects device $j'$ to sample where $j' \neq j$ and hence $ AoI_j^{UAV}(t) > AoI_{j'}^{UAV}(t) $, the reduction in AoI at $n$ will be given by 

\begin{align}
    \Delta^{'} AoI_n^{UAV}(t+1) &= \frac{1}{M} (AoI_{j'}^{UAV}(t)-|m_n|)
     \label{eqn:AoI_reduction_non_optimal_UAV}
\end{align}

Now comparing Eqn. \eqref{eqn:AoI_reduction_optimal_UAV} and \eqref{eqn:AoI_reduction_non_optimal_UAV}, $\Delta AoI_n^{UAV}(t+1) > \Delta^{'} AoI_n^{UAV}(t+1)$. Because all the UAVs $n \in N$ operate independently, the same result holds for other UAVs too. Hence the MAF-MAD scheduler offers the largest AoI reduction at all the UAVs in the sampling step. % under the assumed conditions. 

\textbf{AoI reduction at the TBS:} In the updating step, the scheduler needs to select $K$=1 device to update. From Algorithm \ref{algorithm - MAD} step \ref{MAD-update}, at $t$ the MAF-MAD scheduler will update the $k$th device such that $AoI_{k}^{diff}(t) = \max(AoI_{1}^{diff}(t), AoI_{2}^{diff}(t), \dots, AoI_{M}^{diff}(t))$. After the update, the reduction in average AoI at the TBS is 

\begin{align}
    \nonumber \Delta AoI^{TBS}(t+1) &=\frac{1}{M} (AoI^{TBS}(t) - AoI^{TBS}(t+1)) \\
    &=\frac{1}{M}( \sum_{i=1}^{M} (AoI^{BS}_{i}(t) - AoI^{TBS}_{i}(t+1)))
    \label{eqn:delta_AoI_TBS}
\end{align}

For $M-1$ devices that were not updated, their $AoI^{TBS}(t+1)$ increases by 1 (Eqn. \eqref{eqn:transition_AoI_TBS_no_update}). $AoI^{TBS}(t+1)$ for the updated device $k$ will change as per Eqn. \eqref{eqn:transition_AoI_TBS_update} when there is no packet loss. Therefore Eqn. \eqref{eqn:delta_AoI_TBS} becomes 

\begin{align}
    \nonumber \Delta AoI^{TBS}(t+1) &= \frac{1}{M} (\sum_{i=1, i\neq k}^{M}(AoI_{i}^{TBS}(t)-AoI_{i}^{BS}(t+1))\\ \nonumber &+ (AoI_{k}^{TBS}(t)-
    AoI_k^{TBS}(t+1))) \\
   \nonumber &= \frac{1}{M} (-(M-1)+AoI_k^{TBS}(t)-\\ \nonumber & AoI_k^{BS}(t+1)) \\
    \nonumber  &= \frac{1}{M}( -M+1+AoI_k^{TBS}(t)-\\ \nonumber & AoI_{k}^{UAV}(t)-1) \\
     &= \frac{1}{M} (AoI_k^{diff}(t)-M)
     \label{eqn:AoI_reduction_optimal_TBS}
\end{align}

Now for any other scheduler that selects device $k'$ to sample where $k' \neq k$ and hence $ AoI_k^{diff}(t) > AoI_{k'}^{diff}(t) $, the reduction in average AoI at the TBS will be given by

\begin{align}
    \Delta^{'} AoI^{TBS}(t+1) = \frac{1}{M}(AoI_{k'}^{diff}(t)-M)
     \label{eqn:AoI_reduction_non_optimal_TBS}
\end{align}

Comparing Eqn. \eqref{eqn:AoI_reduction_optimal_TBS} and \eqref{eqn:AoI_reduction_non_optimal_TBS}, we see $\Delta AoI^{TBS}(t+1)>\Delta^{'}AoI^{TBS}(t+1)$. Hence MAF-MAD also leads to the maximum reduction in average AoI at the TBS. Therefore when there is no packet loss at any of the sampling and updating steps, and the devices generate traffic based on generate-at-will policy, MAF-MAD is the optimal scheduler for minimizing $AoI^{TBS}(T)$.

\end{proof}

% \textbf{Complexity Analysis:} In the sampling step, $AoI^{UAV}(t)$ is calculated for each of the devices under each UAV. Hence the number of computations involved in the sampling step are $O(m_1+m_2+..m_N)=O(M)$. Similarly the updating step involves calculation of $AoI^{diff}(t)$ of all the $M$ devices which results in a complexity of $O(M)$. Therefore the final complexity of the MAD algorithm is $O(M^2)$ .

\subsection{Deep Q Network (DQN) based scheduler}
\label{dqn_policy}

% The system state evolves based on the Eqns. \eqref{eqn:transition_AoI_UAV_sample}, \eqref{eqn:transition_AoI_UAV_no_sample}, \eqref{eqn:transition_AoI_BS_update} and \eqref{eqn:transition_AoI_BS_no_update}. Therefore this problem of AoI minimization can be formulated as a finite horizon MDP with finite state and action spaces. But the high dimensionality of the state space makes it computationally infeasible to obtain the optimal scheduler using the standard finite-horizon dynamic programming algorithm \cite{powell2007approximate}. This motivates the need

% Owing to the nature of evolution of the system state parameters, represented by (2), (5), and (6), the problem can be modeled as a finite-horizon MDP with finite state
% and action spaces. However, due to the curse of extremely high dimensionality in S, it is computationally infeasible to obtain the optimal policy using the standard finite-horizon DP algorithm [23]. Motivated by this, we propose a deep RL algorithm for solving (8) in the next subsection. Deep RL is suitable for this problem since it can reduce the dimensionality of the large state space while learning the optimal policy at the same time 

The requirement of taking the channel non-idealities and traffic generation patterns motivates the use of a reinforcement learning (RL) based scheduler. In RL, an agent is able to take an informed decision on which actions to take after interacting with the environment as the interaction allows it to learn about the environment. A typical RL framework has %shown in Fig. \ref{fig:RL} 
an agent (scheduler in our case) that takes an \textit{action} (devices to sample and update) after observing the environment \textit{state}, following which it gets a \textit{reward} which is like a feedback on the quality of the action taken. We use a Q-learning algorithm \cite{powell2007approximate} to design a scheduler with the objective of AoI minimization.

In Q-learning, a state-action value function $Q^{\pi}(x(t),a(t))$ is defined as the net reward when the system performs action $a(t)$ while starting from state $x(t)$ and then follows policy $\pi$. Q-learning tries to calculate the Q-value function by learning a policy which will maximize the cumulative reward. This is done using the Bellman update rule shown in Eqn. \eqref{eqn:q_function} below.

\begin{multline}
    Q_{t+1}(x(t),a(t))=Q_{t}(x(t),a(t))+ \\ \beta[r(t)+\gamma \max_{a} Q_{t} (x(t+1),a)-Q_{t}(x(t),a(t))]
    \label{eqn:q_function}
\end{multline}

\noindent where $r(t)$ is the reward at slot $t$. $\beta$ and $\gamma$ are the learning rate and the discount factor respectively. Discount factor signifies how important the future rewards are and is set to a value between 0 and 1. As our problem has a terminal state when $t=T$, it is an episodic task and hence $\gamma$ is set to 1 \cite{abd2019deep}.

The agent approximates the Q-function using Eqn. \eqref{eqn:q_function} and then takes the action that maximizes the reward. As the agent will not have the correct estimate of the Q-values corresponding to many state-action pairs as a consequence of not visiting them, it needs to explore in addition to exploiting the known values of state-action pairs. This is known as the exploration-exploitation trade-off. An epsilon ($\epsilon$) greedy approach where the agent takes a random action with the objective of exploring the environment with probability $\epsilon$ and acts greedily with probability $1-\epsilon$ is generally used to ensure the agent doesn't get stuck with sub-optimal actions. 

While Eqn. \eqref{eqn:q_function} can be theoretically used for all scenarios, its iterative nature makes it infeasible to be used for large state spaces as it will suffer from very slow convergence in addition to needing a large memory \cite{abd2019deep}. Moreover this approach cannot be used when there are unobserved states \cite{powell2007approximate} and visiting every state-action pair is not practical. This is where the approximating power of neural networks are put to use as they can extract the important features from the available data points and summarize them in smaller dimensions. Neural networks approximates the Q-value function as $Q(x,a|\theta)$ where $\theta$ are its weights, and this approach for Q-learning is known as Deep Q-Networks (DQN) and is based on a Markov Decision Process (MDP) formulation of the problem \cite{mnih2015human}. %We use a fully connected neural network with two hidden layers $H_1 \times H_2$. The number of neurons at the input and output layers are $|\mathcal{X}|$ and $|\mathcal{A}|$ respectively. 
The goal is to find the optimal values for $\theta$ so that the neural network can approximate the optimal Q-value function as close as possible. % such that the ANN will be as close as possible to the optimal Qfunction. To this end, we define a loss function for any set of (s(n); a(n); c(n); s(n + 1)), as follows 
However the use of a single neural net may cause instability and therefore two neural networks with the same architecture are used: the \textit{current neural network} with weights $\theta$ and a \textit{target neural network} with weights $\theta^{-}$. While the current network is used as a function approximator and its weights are updated iteratively after each slot, the target network computes the target Q-value function and its weights are fixed for a while and updated every $O$ slots \cite{mnih2015human}.

At each time step, the \textit{experience} of the agent is stored in \textit{experience replay} which is also called replay memory and has a size $R$. The experience at time $t$ is denoted as $exp(t) = (x(t),a(t),r(t),x(t+1))$ and provides a summary of the agent’s experience at that time. The replay memory is then used to randomly sample a mini-batch of stored experiences $x(t'),a(t'),r(t'),x(t'+1)$ which will be used to train the neural network. The primary benefit of using random samples from replay memory to learn instead of consecutive samples as the agent encounters them is that consecutive experiences have high correlation which may lead to inefficient learning \cite{8382166}. Therefore at each episode, a mini-batch of $B$ past experiences are fed as an input to the DQN. The loss function is given by %Eqn. \eqref{eqn:loss} as shown below. % so that the Q-value function can be approximated as accurately as possible. 

\begin{equation}
    L(\theta) =   [r(t')+\gamma \max_{a(t')}Q(x(t+1),a(t')|\theta^{-}) - Q(x(t),a(t)|\theta)]^2
    \label{eqn:loss}
\end{equation}

\noindent where $Q(x(t+1),a(t')|\theta^{-})$ is evaluated by the target network and $Q(x(t),a(t)|\theta)$ is evaluated using the current network. The weights are then updated as shown in Eqn. \eqref{eqn:weight_update_current}.

\begin{equation}
    \theta = \theta + \alpha(y(t')-Q(x(t'),a(t')|\theta))\nabla_{\theta}Q(x(t'),a(t')|\theta)
    \label{eqn:weight_update_current}
\end{equation}

\noindent where $y(t')=r(t')+\max_{a}Q(x(t'+1,a|\theta^{-}))$ and $\nabla_{\theta}$ denotes the gradient with respect to $\theta$.

The state, action and reward in the context of our work is explained next. 

\subsubsection{State Space}
State is the scenario encountered by the scheduler. We define our state as

\begin{align}
x(t) = (t, \{{AoI_{m}^{UAV}(t)}\}_{m=1}^{M}, \{{AoI_{m}^{TBS}(t)}\}_{m=1}^{M})%, \{\hat{e}_{n}(t)\}_{n=1}^{N} )
\label{eqn:state_space}
\end{align}

\noindent where $t$ is the ongoing slot. At $t$=1, $AoI_{m}^{UAV}(t)$ and $AoI_{m}^{TBS} (t)$ are initialized to 1. The set of state space is denoted by $\mathcal{X}$.

\subsubsection{Action Space}
As described in Sec. \ref{sec:system}, scheduling involves two simultaneous steps of sampling updating devices at each slot. Therefore the action at each slot $t$ is % given as 

\begin{align}
    a(t) = (\mathcal{S}(t), \mathcal{U}(t))
    \label{eqn:action_space}
\end{align}

\noindent where $\mathcal{S}(t)$=\{$S_1(t), S_2(t), ..S_N(t)$\} is the set of all devices sampled by all the UAVs at slot $t$ such that $|S_n(t)| \le L \hspace{0.05in} \forall \hspace{0.05in} n \in N$. Similarly, $\mathcal{U}(t) $ is the set of devices selected to update their packets to the TBS where $|\mathcal{U}(t)| \le K$. The set of action space is denoted by $\mathcal{A}$.

\subsubsection{Reward}
As the objective is improving the information freshness at the TBS, the reward is given by the negative of the average AoI of all the devices at the TBS at slot $t$ %, as follows:
\begin{align}
    r(x(t), a(t)) = r(t) = -\sum_{m=1}^{M} AoI_m^{TBS}(t)
    \label{eqn:reward_fn}
\end{align}

Based on action $a(t)$, the environment transitions from state $x(t)$ to a new state $x(t+1)$ according to the state transition probabilities described in Eqn. \eqref{eqn:transition_AoI_UAV_sample}, \eqref{eqn:transition_AoI_UAV_no_sample}, \eqref{eqn:transition_AoI_TBS_update}, \eqref{eqn:transition_AoI_TBS_no_update} while resulting in a reward $r(t)$. Thus it is a finite horizon MDP with finite state and action spaces, which makes it suitable for DQN. 

The DQN based scheduler is shown in Fig. \ref{fig:dqn} and explained in Algorithm \ref{algorithm - dqn}. The hyper-parameters and the network are initialized in lines \ref{dqn_algo:hyp_initialize} - \ref{dqn_algo:network_initialize}. For each step, a random action is taken with probability $\epsilon$ or the action with the highest Q-value is selected with probability 1-$\epsilon$ in lines \ref{dqn_algo:action_select_start} - \ref{dqn_algo:action_select_end}. The selected action is performed and network transitions to the new state in lines \ref{dqn_algo:action_perform}-\ref{dqn_algo:obs_reward}. Line \ref{dqn_algo:store_exp} stores the resulting experience in the replay memory. Then a mini-batch of samples is used to train the network in lines \ref{dqn_algo:train_start}-\ref{dqn_algo:train_end}, after which the weights of the current network are updated in lines \ref{dqn_algo:weight_update_current}. Line \ref{dqn_algo:weight_update_target} updates the target network weight every $O$ slots.

\textbf{Convergence:} Convergence of neural networks are hard to be analytically investigated and is strongly dependent on the set of hyper-parameters used \cite{jomaa2019hyp}. Selection of hyper-parameters is a challenging task and therefore a reasonable set of hyper-parameters is found by trying out different values. More details on the hyper-parameters used in our case are presented in Sec. \ref{sec:results}. Similar to \cite{abd2019deep, 9014214, song2020optimal, zhou2019deep}, we limit our investigation into the convergence to simulations, where the neural network converges under the hyper-parameters used and the results presented for the DQN scheduler are the values obtained after it's convergence.

\begin{figure} [htb]
    \vspace{-0.05 in}
    \centering
    \includegraphics[scale=0.28, trim={1cm 0cm 0cm 0cm},clip, angle=0]{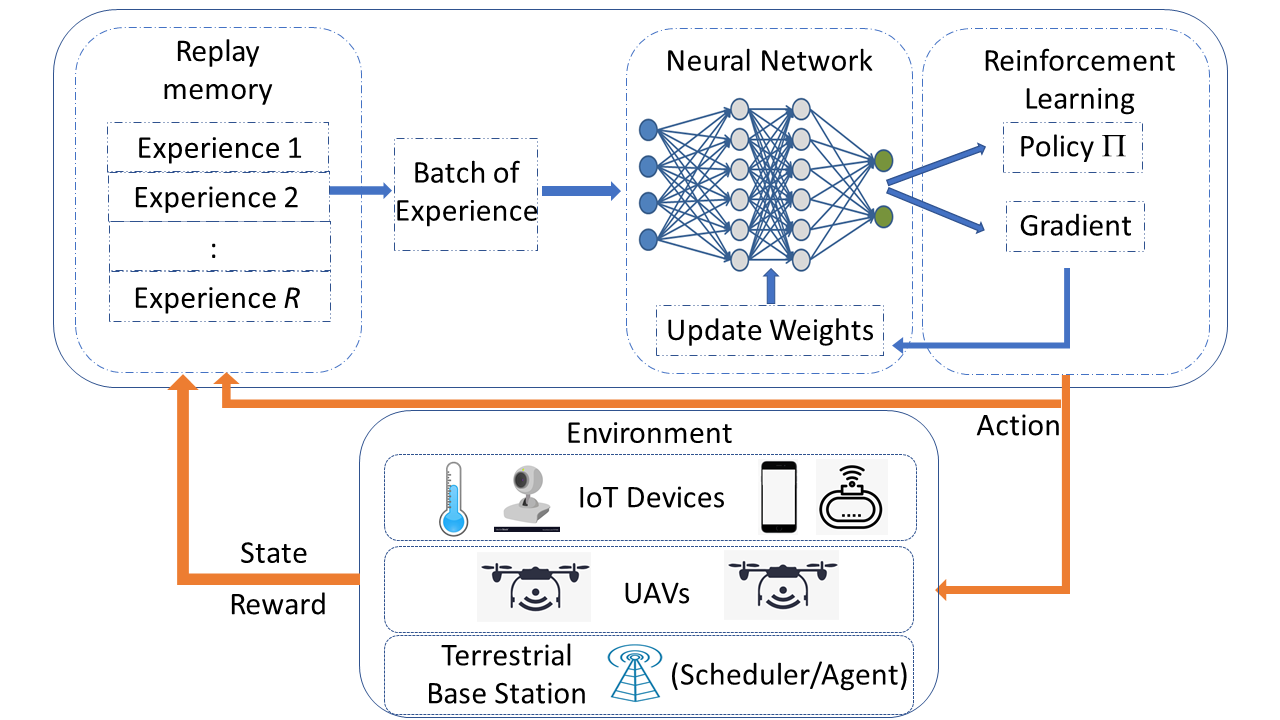}
    \vspace{-0.1in}
    \caption{Deep Q Network}
    \label{fig:dqn}
    \vspace{-0.05 in}
\end{figure}

\begin{algorithm}[t]
	\caption{DQN scheduler}
	\begin{algorithmic}[1]\small 
		\State Initialize the replay memory $R$, probability $\epsilon$, current network weights $\theta$, target network weights $\theta^{-}$. Set episode $=1$. \label{dqn_algo:hyp_initialize}
		\State Initialize the current network $Q(s,a|\theta)$ and the target network $Q(s,a|\theta^{-})$. \label{dqn_algo:network_initialize}
		\For{episode=1:E}
		\State initialize the environment by setting $t=1$. % and observe initial state $ x(1) $.
		    \For{t=1:T}
    		\State Select an action $ a $: \label{dqn_algo:action_select_start}
    		\State \quad select a random action $a \in \mathcal{A}$ with probability $ \epsilon $,
    		\State \quad otherwise select $ a = \max_{a} Q(x(t),a|\boldsymbol{\theta}) $. \label{dqn_algo:action_select_end}
    		\State Perform action $ a $. \label{dqn_algo:action_perform}
			\State Observe the reward $ r(t) $ and the new state $ x(t+1) $. \label{dqn_algo:obs_reward}
		    \State Store experience $ \left\{x(t),a(t),r(t),x(t+1)\right\} $ in replay \Statex \quad \quad \quad memory. \label{dqn_algo:store_exp}
		    \State Sample a random mini-batch $B$ of transitions \Statex \quad \quad \quad $x(t'),a(t'),r(t'),x(t'+1)$ from the replay memory. \label{dqn_algo:train_start}
		    \State Calculate target value:
		    \If{$x(t'+1)$ is a terminal state}
		    \State $y(t')=r(t')$,
		    \Else
		    \State $y(t')=r(t')+\max_{a}Q(x(t'+1,a|\theta^{-}))$. \label{dqn_algo:train_end}
		    \EndIf
		    \State Update $\theta$ using the weight update rule as per Eqn. \eqref{eqn:weight_update_current}. \label{dqn_algo:weight_update_current}
		    \State Update target network by setting $\theta^{-}=\theta$ every $O$ steps. \label{dqn_algo:weight_update_target}
		    \State Episode ends if $x(t+1)$ is the terminal state.
		    \EndFor
		\EndFor
		\end{algorithmic}
	\label{algorithm - dqn}
\end{algorithm}

\iffalse
\subsection{Baseline schedulers}
%Random scheduler is used to compare the performance of the other schedulers. 
To compare the performance of the MAD and DQN schedulers, we use three baseline schedulers - (i) Maximal AoI First (MAF) scheduler -- devices with highest AoI at both the UAV and TBS are selected for sampling and updating respectively \cite{sun2018age}. For sampling, each UAV $n$ selects $L$ devices with highest $AoI_m^{UAV}(t)$ from $m_n$ devices associated with it. For updating, $K$ devices with highest $AoI_m^{TBS} (t)$ are updated from the total $M$ users. (ii) Round Robin (RR) -- under RR, the available channels are assigned in equal and circular fashion among the devices which ensures fairness among the devices \cite{muller2018evaluation}. (iii) random scheduler -- a random set of devices are selected to be sampled and updated under the constraints. % at each slot.
\fi

% \input{5-DQN}

\section{Results}
\label{sec:results}
% \Vijay{Biplav - Give a nice introduction to Simulation Results, similar to your INFOCOM paper.}

% \biplav{change all image from BS to TBS}
% \Vijay{DO NOT USE NOTATIONS in the results section.. Why do you need to use  $AoI^{TBS}$ here???? Just say sum average AoI at the TBS..}

In this section, we present the results in terms of average AoI at the TBS achieved for MAF-MAD and DQN schedulers, and compare them to three baseline schedulers (See subsection V.A). We first focus on smaller IoT network setting (with $10$'s IoT devices) and investigate the efficacy of proposed algorithms against the baseline ones. Furthermore, we also investigate the effect of various control parameters, such as %$L,K,M$ and $N$ on $AoI^{TBS}(T)$ \Vijay{Say what these notations are...}.
the number of channels, IoT devices and UAVs. Finally, we discuss the comparative analysis of all the schedulers in the case of larger IoT network setting as well.

\subsection{Baseline schedulers}

\begin{itemize}
    \item \textbf{Maximal AoI First (MAF) scheduler} -- devices with highest AoI at both the UAV and TBS are selected for sampling and updating respectively \cite{sun2018age}. For sampling, each UAV selects $L$ devices with highest $AoI_m^{UAV}(t)$ out of the $m_n$ devices associated to it. Similarly for updating, $K$ devices out of the total $M$ with highest $AoI_m^{TBS} (t)$ get to update their packets.
    
    \item \textbf{Round Robin (RR) Scheduler} -- under RR, the available channels are assigned in equal and circular fashion among the devices which ensures fairness among the devices \cite{muller2018evaluation}
    
    \item \textbf{Random scheduler} -- This a baseline scheduler where a random set of devices are selected to be sampled and updated under the pre-specified channel constraints.

\end{itemize}

\subsection{Simulation setting}

The IoT devices and UAVs are placed randomly \cite{lyu2016placement} in an area of $l=b=$1000m and observation interval $T$=10 slots. Simulations for the DQN scheduler are performed using the tf-agents library \cite{GitHubte58:online}. The fully connected neural net has two hidden layers and the architecture is $|\mathcal{X}|\times{H_1}\times{H_2}\times |\mathcal{A}|$. The DQN is implemented on an NVIDIA DGX station with an Intel Xeon E5-2698 v4 CPU and an NVIDIA Tesla V100 GPU (32 GB memory). Data communication between CPU and GPU takes place through a PCIe 3.0 X16 slot. DQN hyper-parameters are shown in in Table \ref{tab:dqn_params}. The results for the other schedulers are averaged over $10^4$ runs. %The analysis is performed for two types of IoT networks -- (i) smaller and (ii) larger IoT networks.

%We use a decaying epsilon approach where the value of $\epsilon$ is gradually reduced from $\epsilon_{start}$ to $\epsilon_{end}$ using a polynomial decay function \cite{tfcompat30:online} with a step-wise decay every $d$ episodes that ensures less exploration as the learning proceeds.

\begin{table}[htb]
 \centering
 \small
 \vspace{-0.1in}
 \caption{DQN Parameters}
%   \scriptsize
%  \begin{tabular}{|p{3cm}|p{3.6cm}|}
    \begin{tabular}{|c|c|}
    \hline
    Neurons in hidden layer 1 $H_1$   & 1024 \\ \hline
    Neurons in hidden layer 2 $H_1$   & 1024 \\ \hline
    Relay memory size $R$ & 50,000 \\ \hline
    Mini batch size $B$   & 16   \\ \hline    
    Learning rate $\beta$   & 0.001    \\ \hline
    Learning rate decay rate & 0.95 \\ \hline
    Learning rate decay step & 10,000 \\ \hline
    Discount factor $\gamma$   & 1    \\ \hline    
    % Starting epsilon $\epsilon_{start}$   & $0.2$    \\ \hline
    % Ending epsilon $\epsilon_{end}$   & $10^{-2}$    \\ \hline
    % Decay step $d$   & $40,000$    \\ \hline
    Activation function & ReLU \\ \hline
    Update steps $O$ & 10 \\ \hline
    Optimizer & Adam \\ \hline
    Total episodes $E$ & $10^{6}$\\ \hline

 \end{tabular}
 \label{tab:dqn_params}
\end{table}

\subsection{Smaller IoT networks}

These IoT networks contain 5-10 IoT devices. Such networks are used for monitoring indoors/outdoors  particulate matter pollutants where a single type of sensor is able to monitor particulates of various sizes \cite {reddy2020improving}. We use $L=K=1$ and simulate 2 scenarios: first with $M=12$ IoT devices equally assigned to $N=2$ UAVs, and then $M=9$ IoT devices equally assigned to $N=3$ UAVs.

\subsubsection{Under Ideal Conditions}

% exp 1 - exp1_dqn=51.25, exp1_mad=51.25, exp1_greedy=58.23, exp1_random=60.34, exp1_rr=58.33
% exp 9 - exp4_dqn=46.66, exp4_mad=46.66, exp4_greedy=48.32, exp4_random=51.90, exp4_rr=48.33

Here the devices generate traffic using a generate-at-will model with no packet loss in any of the channels. The results shown in Fig. \ref{fig:ideal_age} confirm that MAF-MAD is the optimal scheduling policy and outperforms all other scheduling policies. As DQN is able to learn the the optimal scheduling, it's performance also converges to MAF-MAD's performance. MAF and RR have similar performance while random performs the worst. With MAF-MAD and DQN performing similarly, their performance exceeds MAF, RR and random by 5\%-10\%, 5\%-10\% and 11\%-17\% respectively. % under the two cases.

%%% ideal cases start

% \begin{figure} [htb]
    \vspace{-0.05 in}
%     \centering
%     \includegraphics[scale=0.3, trim={0cm -1cm 0cm 0cm},clip, angle=0]{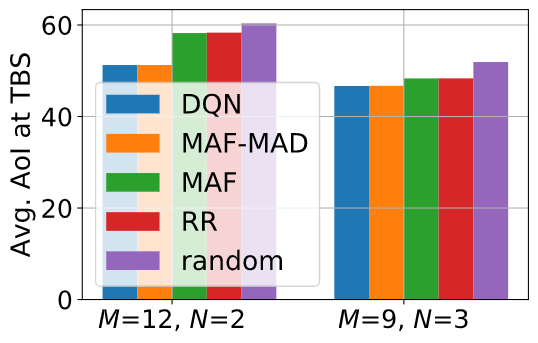}
%     \vspace{-0.1in}
%     \caption{$AoI^{TBS}(T)$ under ideal conditions}
%     \label{fig:ideal_age}
%     \vspace{-0.05 in}
% \end{figure}

%%% ideal cases end

\subsubsection{Under general conditions}

% exp 4 - exp4_dqn=53.08, exp4_mad=57.93, exp4_greedy=62.67, exp4_random=63.01, exp4_rr=62.65
% exp 12 - exp12_dqn=47.11, exp12_mad=51.65, exp12_greedy=54.97, exp12_random=56.68, exp12_rr=55.04

\begin{figure}[H]
\vspace{-.05in}
\subfigure[\label{fig:ideal_age}]{
\epsfig{figure=images/results_new/ideal_age.png,width=1.65 in, keepaspectratio}}
\subfigure[\label{fig:non_ideal_age}]{
\epsfig{figure=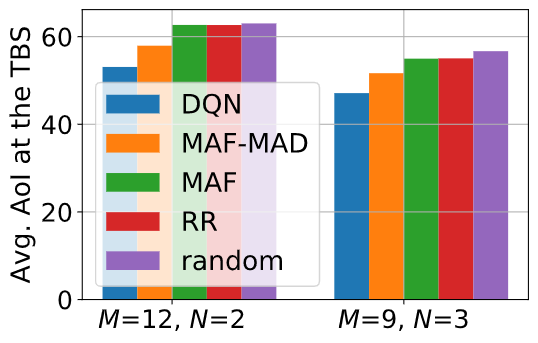,width=1.65 in, keepaspectratio}}
\caption{$AoI^{TBS}(T)$ under (a) ideal conditions (b) general conditions for smaller IoT networks}
\vspace{-.05in}
\end{figure}

The results are shown in Fig. \ref{fig:non_ideal_age} and we see that DQN outperforms MAF-MAD and all other schedulers under these conditions. Therefore, MAF-MAD scheduler is no longer optimal under these conditions. MAF-MAD is followed by MAF, RR and random. DQN's performance exceeds MAF-MAD, MAF, RR and random by 9\%-10\%, 16\%-18\%, 17\%-18\% and 18\%-20\% respectively. % in the two cases.

% \begin{figure} [htb]
%     \vspace{-0.05 in}
%     \centering
%     \includegraphics[scale=0.35, trim={0cm 0cm 0cm 0cm},clip, angle=0]{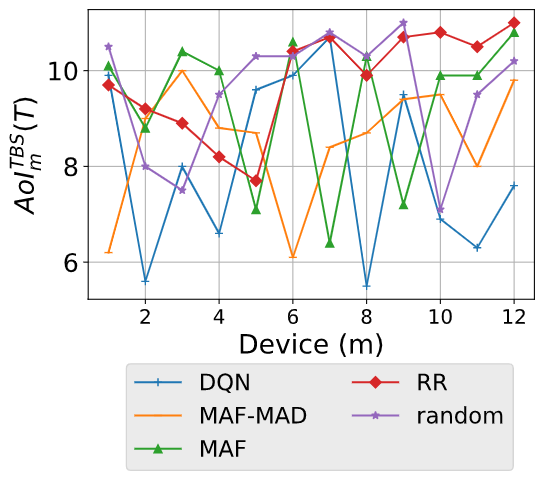}
%     \vspace{0.1in}
%     \caption{Individual AoI under only packet loss}    
%     \label{fig:individual_AoI_loss}
%     \vspace{-0.05 in}
% \end{figure}

% \begin{figure} [htb]
%     \vspace{-0.05 in}
%     \centering
%     \includegraphics[scale=0.35, trim={0cm 0cm 0cm 0cm},clip, angle=0]{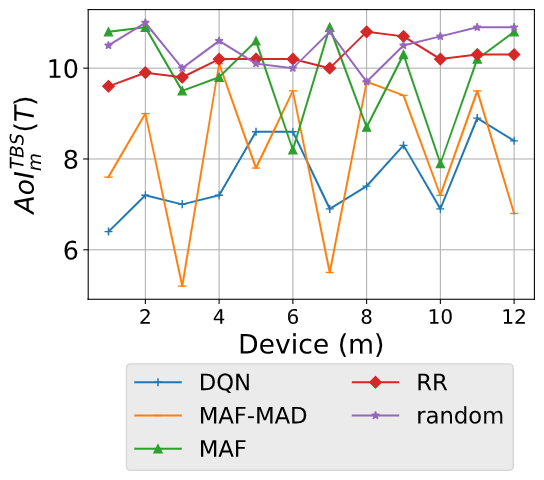}
%     \vspace{0.1in}
%     \caption{Individual AoI under only periodic traffic generation}
%     \label{fig:individual_AoI_period}
%     \vspace{-0.05 in}
% \end{figure}

\begin{figure}[H]


\vspace{-.05in}
\subfigure[\label{fig:individual_AoI_loss}]{
\epsfig{figure=images/results_new/individual_AoI_loss.png,width=1.65 in, keepaspectratio}}
\subfigure[\label{fig:individual_AoI_period}]{
\epsfig{figure=images/results_new/individual_AoI_period.png,width=1.65 in, keepaspectratio}}
\caption{Individual AoI under only (a) packet loss (b) periodic traffic generation}
\vspace{-.05in}
\end{figure}

To understand why DQN has the best performance, we show the individual AoI of the devices at TBS at $t=T$ for lossy channel with generate-at-will traffic generation in Fig. \ref{fig:individual_AoI_loss} and non-lossy channels with periodic traffic generation in Fig. \ref{fig:individual_AoI_period}. $N=2$ UAVs are providing coverage to $M=12$ IoT devices with the IoT devices equally distributed between the UAVs. For Fig. \ref{fig:individual_AoI_loss}, %$l_{s,m}$ and $l_{u,m}$ randomly between 0 and 1 for $m=1,2...12.$. 
while the results could be shown for any channel condition, we pre-assign the packet loss probabilities $l_{s,m}$ and $l_{u,m}$ for ease of reproducibility \cite{8885156}. Under lossy channel conditions, focusing on channels that improve packet receptions will minimize average AoI and the DQN based scheduler will learn this. The average AoI per device %in Fig. \ref{fig:individual_AoI_loss} 
for DQN, MAF-MAD, MAF, RR and random were - \textbf{8.00}, 8.55, 9.29, 9.34, 9.58 respectively and as expected, DQN has the best performance. Similarly in Fig. \ref{fig:individual_AoI_period}, $p_m$ is selected randomly between 2,3,4 and pre-assigned for the devices. % for $m$=1,2,...12. 
The average AoI for DQN, MAF-MAD, MAF, RR and random were - \textbf{7.65}, 8.10, 9.88, 10.18, 10.47 respectively, and it can be seen that the DQN based scheduler has lowest AoI for most of the devices. This means DQN is able to learn the periods of packet generation at the devices and sample them accordingly, which the other schedulers are unable to do. 

% \begin{figure} [htb]
%     \vspace{-0.05 in}
%     \centering
%     \includegraphics[scale=0.3, trim={0cm -1cm 0cm 0cm},clip, angle=0]{images/results_new/non_ideal_age.png}
%     \vspace{-0.1in}
%     \caption{$AoI^{TBS}(T)$ under non-ideal conditions}
%     \label{fig:non_ideal_age}
%     \vspace{-0.05 in}
% \end{figure}

% Under both the above results in Fig. \ref{fig:ideal_age} and \ref{fig:non_ideal_age}, we see that $AoI^{TBS}(T)$ is generally lower for Case 2 compared to Case 1, and this is a due to parameters $M,N,L $ and $K$. Therefore 
We now investigate the effect of different parameters on average  AoI at the TBS under general conditions.

\paragraph{Variable $M$ (IoT devices)}
For $L$=2, $K$=1, we deploy $N=3$ UAVs and vary the total number of IoT devices $M$. The IoT devices are assigned equally among the 3 UAVs.
%$N$=3 and varying $M$, we consider 3 cases -- (a) Case 1: $M$=3 with $m_1$=$m_2$=$m_3$=1 (b) Case 2: $M$=6 with $m_1$=$m_2$=$m_3$=2 (c) Case 3: $M$=9 with $m_1$=$m_2$=$m_3$=3. 
The results are shown in Fig. \ref{fig:variable_M} - as $M$ increases, average  AoI at the TBS increases too. This is because while $M$ increases, number of channels ($L$, $K$) remain same due to which each device now has lesser chances of getting sampled and updated, resulting in increased average  AoI at the TBS. % at the BS.

\paragraph{Variable $N$ (UAVs)}
With $L$=2, $K$=1, we place $M=10$ IoT devices and vary the number of UAVs deployed. The IoT devices are assigned equally among the UAVs with the remaining IoT devices assigned to the last UAV, if any.
%, we vary $N$ under 3 cases -- (a) Case 1: $N$=1 with $m_1$=10 (b) Case 2: $N$=2 with $m_1$=$m_2$=5 (c) Case 3: $N$=3 with $m_1$=3, $m_2$=4 and $m_3$=3. 
The results are shown in Fig. \ref{fig:variable_N} - for constant $M$, $K$ and $L$, average  AoI at the TBS remains mostly same.
%irrespective of the value of $N$. %There is a slight decrease from Case 1 to Case 3 as %Case 3 has 3 UAVs compared to 1 UAV in Case 1 which means it 
% Case 3 samples 3 times the devices compared to Case 1. Therefore the devices would have a better $\overline{AoI}^{UAV}(T)$. However the $\overline{AoI}^{BS}(T)$ doesn't show any significant improvement- 
This is because even though $N$ is increased, $K$ remains the same for all. Hence capacity of information transfer from the UAVs to the TBS is the same which results in similar average  AoI at the TBS. %The criticality of $K$ and $L$ for improving average  AoI at the TBS will be discussed next.

%% variable M starts

% \begin{figure} [htb]
%     \vspace{-0.05 in}
%     \centering
%     \includegraphics[scale=0.3, trim={0cm -1cm 0cm 0cm},clip, angle=0]{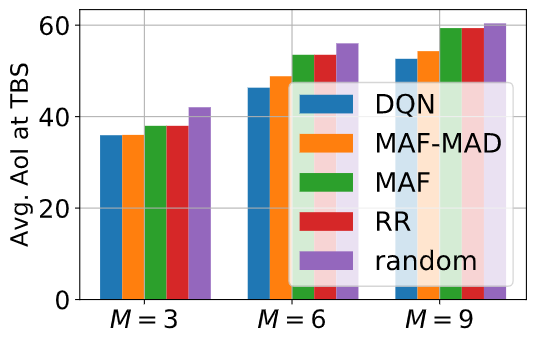}
%     \vspace{-0.1in}
%     \caption{$AoI^{BS}(T)$ for increasing $M$}
%     \label{fig:variable_M}
%     \vspace{-0.05 in}
% \end{figure}

%% variable M ends

\begin{figure}[htb]
\vspace{0.05in}
\subfigure[\label{fig:variable_M}]{
\epsfig{figure=images/results_new/parameters/variable_device.png,width=1.65 in, keepaspectratio}}
\subfigure[\label{fig:variable_N}]{
\epsfig{figure=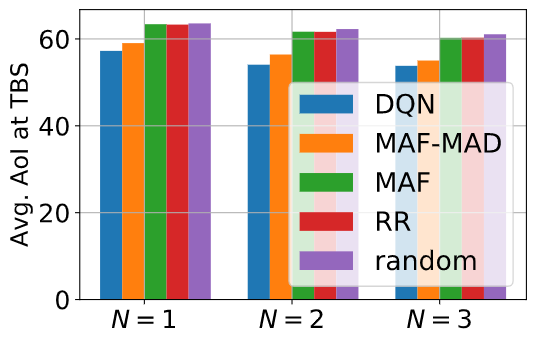,width=1.65 in, keepaspectratio}}
\caption{$AoI^{TBS}(T)$ for varying (a) $M$ (b) $N$. }
\vspace{-0.05in}
\end{figure}

%% variable N starts

% \begin{figure} [htb]
%     \vspace{-0.05 in}
%     \centering
%     \includegraphics[scale=0.3, trim={0cm -1cm 0cm 0cm},clip, angle=0]{images/results_new/parameters/variable_UAV.png}
%     \vspace{-0.1in}
%     \caption{$AoI^{BS}(T)$ for increasing $N$}
%     \label{fig:variable_N}
%     \vspace{-0.05 in}
% \end{figure}

%% variable N ends

\paragraph{Varying number of channels ($L$ and $K$)}
% Next we investigate the effect of $L$ and $K$. 

First for fixed $K$=1, $M$=9 IoT devices are allocated equally between $N$=3 UAVs. $L$ is varied and the results for average  AoI at the TBS are shown in Fig. \ref{fig:variable_L}. As $L$ increases, we don't see any noticeable improvement in average  AoI at the TBS. A higher $L$ will improve the ${AoI}^{UAV}(T)$ (see Fig. \ref{fig:variable_L_UAV}), but this benefit is not passed on to the %AoI at the 
TBS as seen from Fig. \ref{fig:variable_L}. This indicates that increasing $L$ doesn't help much in improving average  AoI at the TBS. The average AoI at the UAV for $L$=3 in Fig. \ref{fig:variable_L_UAV} is the same for all the schedulers as $L=m_n=3, \hspace{.06in} \forall n \in N$, due to which all the devices get equally sampled.

Then for the same setting as above, we vary $K$ keeping other parameters constant. $L$ is set to 3 so that each UAV can sample all 3 devices under it at each slot. Therefore the ${AoI}^{UAV}(T)$ is the same for all the four schedulers as seen from Fig. \ref{fig:variable_K_UAV}. Then average  AoI at the TBS is shown in Fig. \ref{fig:variable_K}. Here we see an improvement as $K$ is increased due to which more information can be transferred from the UAVs to the TBS at each slot resulting in lower average  AoI at the TBS. This shows that the channels between the UAVs and the TBS is more important and critical bottleneck compared to the channels between the IoT devices and the UAVs in improving average  AoI at the TBS.

%%% variable L starts

% \begin{figure} [htb]
%     \vspace{-0.05 in}
%     \centering
%     \includegraphics[scale=0.3, trim={0cm -1cm 0cm 0cm},clip, angle=0]{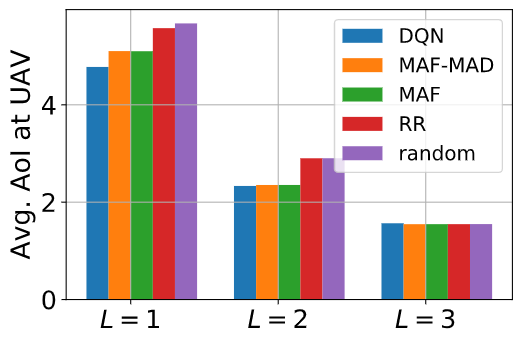}
%     \vspace{-0.1in}
%     \caption{$AoI^{BS}(T)$ under ideal conditions}
%     \label{fig:variable_L_UAV}
%     \vspace{-0.05 in}
% \end{figure}

% \begin{figure} [htb]
%     \vspace{-0.05 in}
%     \centering
%     \includegraphics[scale=0.3, trim={0cm -1cm 0cm 0cm},clip, angle=0]{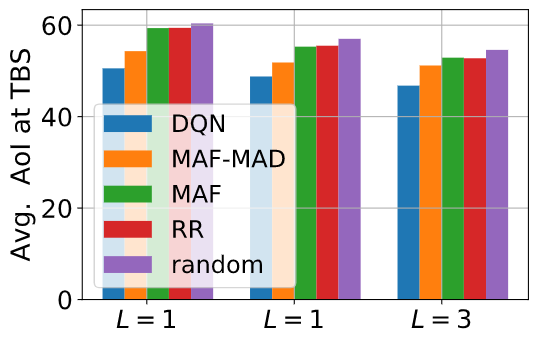}
%     \vspace{-0.1in}
%     \caption{$AoI^{BS}(T)$ under ideal conditions}
%     \label{fig:variable_L}
%     \vspace{-0.05 in}
% \end{figure}

%%% variable L ends

%% variable L starts
\begin{figure}[htb]


\vspace{0.05in}
\subfigure[\label{fig:variable_L_UAV}]{
\epsfig{figure=images/results_new/parameters/variable_L_UAV.png,width=1.65 in, keepaspectratio}}
\subfigure[\label{fig:variable_L}]{
\epsfig{figure=images/results_new/parameters/variable_L.png,width=1.65 in, keepaspectratio}}
\caption{(a) $AoI^{UAV}(T)$ (b)$AoI^{TBS}(T)$ for varying $L$. }
\vspace{-0.05in}
\end{figure}
%% variable L ends

%%% variable K starts

% \begin{figure} [htb]
%     \vspace{-0.05 in}
%     \centering
%     \includegraphics[scale=0.3, trim={0cm -1cm 0cm 0cm},clip, angle=0]{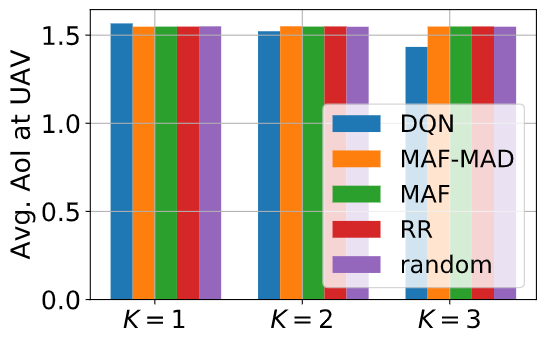}
%     \vspace{-0.1in}
%     \caption{$AoI^{BS}(T)$ under ideal conditions}
%     \label{fig:variable_K_UAV}
%     \vspace{-0.05 in}
% \end{figure}

% \begin{figure} [htb]
%     \vspace{-0.05 in}
%     \centering
%     \includegraphics[scale=0.3, trim={0cm -1cm 0cm 0cm},clip, angle=0]{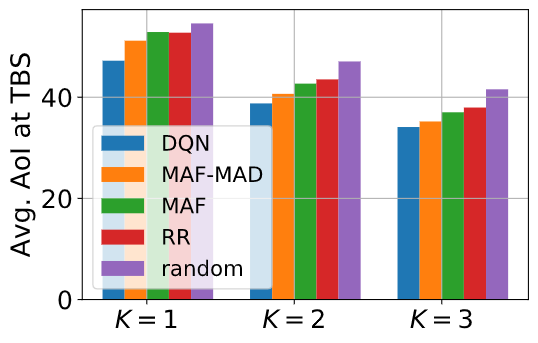}
%     \vspace{-0.1in}
%     \caption{$AoI^{BS}(T)$ under ideal conditions}
%     \label{fig:variable_K}
%     \vspace{-0.05 in}
% \end{figure}

%%% variable K ends

%% variable K starts
% \begin{figure}%[htbp]
\begin{figure}[htb]


\vspace{0.05in}
\subfigure[\label{fig:variable_K_UAV}]{
\epsfig{figure=images/results_new/parameters/variable_K_UAV.png,width=1.65 in, keepaspectratio}}
\subfigure[\label{fig:variable_K}]{
\epsfig{figure=images/results_new/parameters/variable_K.png,width=1.65 in, keepaspectratio}}
\caption{(a) $AoI^{UAV}(T)$ (b)$AoI^{TBS}(T)$ for varying $K$. }
\vspace{-0.05in}
\end{figure}

%% variable K ends

\subsection{Larger IoT networks}

%% Vijay's paper on writing the disadvantage https://dl.acm.org/doi/pdf/10.1145/3137133.3137153

% Large scale IoT networks are used for monitoring larger geographical areas. Examples are pollution monitoring nodes installed in bigger cities like Beijing and New York \cite{yi2015survey}, where the information collected by around 50 IoT devices is aggregated to calculate the overall air quality.
Now we consider networks requiring around 50 IoT devices and these types of networks are deployed where low number of devices are not able to capture enough information. E.g. in agricultural fields, multiple sensors for capturing moisture, temperature levels etc are needed \cite{elijah2018overview}.
%% indian CPCB https://app.cpcbccr.com/ccr/#/caaqm-dashboard-all/caaqm-landing/data

\subsubsection{Experimental Results}

%% DQN bas ideal ends

% \begin{figure}[H]
% \vspace{0.05in}
% \subfigure[\label{fig:dqn_bad_ideal}]{
% \epsfig{figure=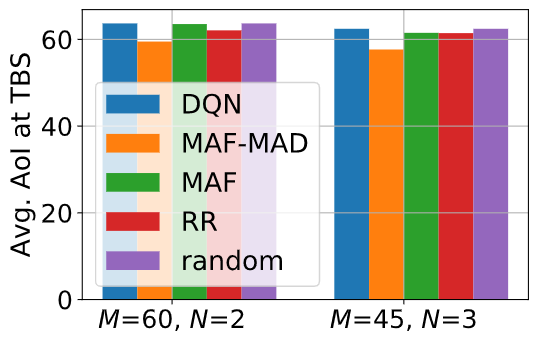,width=1.65 in, keepaspectratio}}
% \subfigure[\label{fig:dqn_bad_non_ideal}]{
% \epsfig{figure=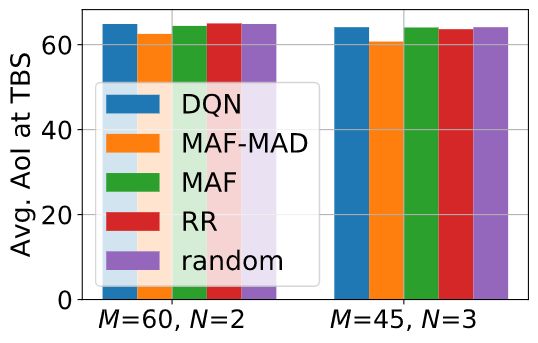,width=1.65 in, keepaspectratio}}
% \caption{(a) $AoI^{UAV}(T)$ (b)$AoI^{BS}(T)$ for varying $K$. }
% \vspace{-0.05in}
% \end{figure}

% \begin{figure} [htb]
%     \vspace{-0.05 in}
%     \centering
%     \includegraphics[scale=0.3, trim={0cm -1cm 0cm 0cm},clip, angle=0]{images/results_new/dqn_bad/dqn_bad_ideal.png}
%     \vspace{-0.1in}
%     \caption{$AoI^{TBS}(T)$ under ideal conditions}
%     \label{fig:dqn_bad_ideal}
%     \vspace{-0.05 in}
% \end{figure}

% \begin{figure} [htb]
%     \vspace{-0.05 in}
%     \centering
%     \includegraphics[scale=0.3, trim={0cm -1cm 0cm 0cm},clip, angle=0]{images/results_new/dqn_bad/dqn_bad_non_ideal.png}
%     \vspace{-0.1in}
%     \caption{$AoI^{TBS}(T)$ under non-ideal conditions}
%     \label{fig:dqn_bad_non_ideal}
%     \vspace{-0.05 in}
% \end{figure}

\begin{figure}[htb]


\vspace{0.05in}
\subfigure[\label{fig:dqn_bad_ideal}]{
\epsfig{figure=images/results_new/dqn_bad/dqn_bad_ideal.png,width=1.65 in, keepaspectratio}}
\subfigure[\label{fig:dqn_bad_non_ideal}]{
\epsfig{figure=images/results_new/dqn_bad/dqn_bad_non_ideal.png,width=1.65 in, keepaspectratio}}
\caption{$AoI^{TBS}(T)$ under (a) ideal conditions (b) general conditions for larger IoT networks.}
\vspace{-0.05in}
\end{figure}

We set $L=K=2$ and simulate two scenarios: first we place $M=60$ IoT devices and assign then equally among $N=2$ UAVs, and secondly $M=45$ devices are assigned equally among $N=3$ UAVs. The results under ideal and general scenarios are shown in Fig. \ref{fig:dqn_bad_ideal} and Fig. \ref{fig:dqn_bad_non_ideal} respectively.

%% DQN bad ideal ends

This result helps us draw key challenges and a future road-map for further research. From Fig. \ref{fig:dqn_bad_ideal} and Fig. \ref{fig:dqn_bad_non_ideal}, it is evident that DQN does badly in terms of AoI minimization for larger networks with MAF-MAD having the best performance under both conditions. This is because, as the network size increases, the action space also increases and DQN \textit{doesn't} provide satisfactory performance when the action space is very large \cite{zahavy2018learn}. Due to this, works involving DQN for scheduling in IoT networks to improve AoI have been limited to smaller networks \cite{9014214, zhou2019deep, hu20201, abd2019deep, song2020optimal}. An alternative approach for larger scenarios is distributed multi-agent learning \cite{xu2015distributed} where each UAV and the TBS acts as independent agents, and this will be explored in a future version of this work. 

% We will also be evaluating the performance of the schedulers in similar scenarios. 

\section{Conclusion}
\label{sec:conclusion}

We prove and show the optimality of a MAF-MAD scheduler for improving AoI in a UAV-relayed IoT network with two hops, where the UAVs act as static ABS that relay packets between the IoT devices and the TBS under no packet loss in any of the channels and generate-at-will traffic generation model at the IoT devices. However when packet loss in the channels and periodicity in packet generation by the IoT devices are considered, MAF-MAD scheduler is no longer optimal. Therefore we propose a DQN based scheduler which performs better than the MAF-MAD scheduler under these conditions when the network size is small. However the proposed DQN scheduler doesn't scale well with the network size due to the inherent limitations of DQN. We also show that the channels between the UAVs and the TBS play a much important role in improving AoI at the TBS compared to channels between the IoT devices and UAV, which points to the importance of backhaul links compared to access links for AoI minimization. %The system model used here is extendable to any two hop relay network with similar resource constraints - this makes the DQN solution approach presented here suitable for them. % with appropriate changes in neural net architecture based on the network. 
In our future work, we will consider mobility and optimal UAV placement under UAV-relayed IoT networks with the objective of improving information freshness. Additionally, to account for larger networks, distributed learning approaches will also be investigated.

% \input{analysis} % give exact setting of the system model here.

% use section* for acknowledgment
% \section*{Acknowledgment}
% The authors would like to thank...

% Can use something like this to put references on a page
% by themselves when using endfloat and the captionsoff option.
\ifCLASSOPTIONcaptionsoff
  \newpage
\fi

\bibliographystyle{IEEEtran}
\bibliography{mybib}

\end{document}